\newtheorem{definition}{Definition}
\newtheorem{axiom}{Axiom}
\newtheorem{proposition}{Proposition}
\newtheorem{example}{Example}
\newtheorem{theorem}{Theorem}
\newtheorem{lemma}{Lemma}
\title{EIP-4844 Economics and Rollup Strategies}
\author[1]{Davide Crapis}
\author[2]{Edward W. Felten}
\author[2]{Akaki Mamageishvili}
\affil[1]{Ethereum Foundation}
\affil[2]{Offchain Labs}
\date{September, 2023}
\begin{document}

\maketitle

\begin{abstract}
    We study the economics of the Ethereum improvement proposal 4844 and its effect on rollups' data posting strategies. Rollups' cost consists of two parts: data posting and delay. In the new proposal, the data posting cost corresponds to a blob posting cost and is fixed in each block, no matter how much of the blob is utilized by the rollup. The tradeoff is clear: the rollup prefers to post a full blob, but if its transaction arrival rate is low, filling up a blob space causes too large delay cost. The first result of the paper shows that if a rollup transaction arrival rate is too low, it prefers to use the regular blockspace market for data posting, as it offers a more flexible cost structure. Second, we show that shared blob posting is not always beneficial for participating rollups and change in the aggregate blob posting cost in the equilibrium depends on the types of participating rollups. In the end, we discuss blob cost-sharing rules from an axiomatic angle. 
\end{abstract}

\section{Introduction}
Ethereum improvement proposal (EIP) numbered 4844,~{dubbed as EIP-4844}, is meant to create a cheaper and more efficient calldata posting service on the Ethereum main chain, sometimes called layer one (L1). The goal is to facilitate the Ethereum ecosystem move to rollups. Today, the largest optimistic and ZK rollups offer fees that are~3-50x lower than Ethereum L1. This EIP and follow-ups will further reduce costs of transacting on rollups by providing extra space, thus creating strong incentives for users to switch to using rollups and enabling new applications that can borrow Ethereum L1 security at a much lower cost.

In this project, we study the economics of the proposal. We refer to the market created by EIP-4844 as the data market and the gas market of the Ethereum mainnet as the main market. In particular, we look at the trade-offs faced by rollups that are adopting the new service: 

\begin{itemize}
    \item[(1)] When should a rollup use the data market versus the main market for sending data to L1?
    \item[(2)] Is there a substantial efficiency gain in aggregating data from multiple rollups and what happens to the data market fees?
    \item[(3)] When would rollups decide to aggregate and what is the optimal cost-sharing scheme?
\end{itemize}

In what follows we set up an economic model of aggregate rollup demand that we use to study the above questions. We make simplifying assumptions that allow us to obtain a crisp characterization of optimal rollup data posting strategies. In particular, we consider a continuous-time model in a large market with many rollups. 

We model the cost of rollups as a sum of two parts. The first part of the cost is observable, it is a data posting cost. The second is delay cost. For some applications delay of L1 finality is not crucial. However, many applications built on top of rollups need fast L1 finality for their liveness and some even include it in their security model. The goal of the rollup is to minimize overall costs per transaction, as the per-transaction costs are what users incur when using rollups. The blob posting cost is calculated in the equilibrium state endogenously. The main market price per gas is assumed to be fixed for simplicity. The rollups decide between using either of the two technologies for transaction-relevant data posting, that is, both markets are perfect substitutes for each other~\footnote{Using both markets interchangeably is technologically feasible.}.  We show that if the demand for blob posting is high, it drives smaller rollups to use the main market data posting strategy. We identify conditions when posting data on the main market is better for the rollup than posting blobs. In the second part of the paper, the joint blob posting option for two (or more) rollups is studied. Depending on the type of rollups that decide to post a shared blob, the price of the blob in the equilibrium can change in both directions, up or down. We derive bounds on the increase or decrease of the blob price in relative terms. If the shared blob posting is profitable, the rollups can join in doing so. We study cost-sharing rules from the axiomatic angle, by employing the Nash bargaining solution concept from the economic theory. In particular, we show that if both rollups were using data market, in the joint blob posting large rollup has to pay less than a proportional cost of the joint blob. On the other hand, it always pays more than half of the joint blob cost and the improvement of the large rollup is always less than the improvement of the small rollup.

\subsection{Related Literature}

In~\cite{opt_chaos}, the authors study how under the EIP-1559 dynamic fee mechanism the gas usage converges to the target usage over time. This theoretical and empirical observation is a cornerstone of our modeling of the equilibrium state.~\cite{dynamic_posted_price} proposes a dynamic posted-price mechanism—which uses not only block utilization
but also observable bids from past blocks to compute a posted price for subsequent blocks, they study the stability and welfare optimality of the proposed policy in a steady state.
In~\cite{multi_dim_pricing}, the authors argue that different resource types should be independently priced, as converting all resources in one unit and pricing it uniformly is economically inefficient. EIP-4844 can be seen as the first step towards fixing the economic inefficiencies of pricing different types of resources together.~\cite{crapis2023eip4844} provides a definition and empirical analysis of the upcoming EIP-4844 fee market, which introduces data gas on the Ethereum blockchain and a data gas pricing mechanism modeled after the fee update mechanism of EIP-1559.~\cite{multidimensional} studies optimal dynamic pricing method for multiple resources, with uncertain future demand flow and statistical dependencies between resource demands. In~\cite{eff_batch_posting} the authors provide a rollup data batch posting strategy, in the context of a single independent rollup, and in the absence of a dedicated market for data. They analyze trade-offs between price and time that are similar in nature to the ones faced here in the presence of a dual market and multiple rollups.

\section{Continuous-time Model}

In this section, we outline modeling assumptions and derive initial results on the rollup behavior in the presence of two potential markets. We make the following list of assumptions:

\begin{itemize}
\item[$\bullet$] Delay cost of a transaction is $aD$, where $D$ is the delay, in time units, experienced by the transaction (a time when a transaction was posted in a batch or blob to L1, minus time when it arrived to L2 sequencer), and $a>0$ is a positive constant.
\item[$\bullet$] L1 gas price is $G$, which is treated as a constant.
\item[$\bullet$] The base cost of a batch- or blob-posting transaction on L1 is $P_0G$. Here $P_0$ indicates the size of the metadata associated with the rollup transaction containing a batch/blob.
\item[$\bullet$] The cost of posting a blob on L1 is $P_0G+B$, where $B$ will be set later to a market-clearing price. The latter is interpreted as the minimum price for which no more than three blobs are posted per time unit on average.
\item[$\bullet$] The cost of posting a batch of $n$ transactions on L1 is $(P_0+P_1n)G$.
\end{itemize}

The target number of blobs per Ethereum block is denoted by $k$~\footnote{Initially set to $2$ and currently set to $3$.}.
We treat time as continuous so that a blob can be posted at any time. Conceptually, a ``time unit" can be thought of as one L1 block time, but in this model, we will allow blobs to be posted ``in between" L1 blocks.
Suppose there are rollups with transaction arrival rates $R_1,R_2,..., R_n$ and they are sorted in decreasing rates, $R_{i}\geq R_{i+1}$ for any $i\in \{1,...,n-1\}$. The goal of a rollup is to minimize cost per transaction. The latter is obtained by dividing the total cost by the total number of transactions in the blob and paid by rollup users. The justification of this approach is that even though some transactions arrive earlier than others, their arrival time can be assumed to be uniformly random. This payment rule corresponds to the average costs, and therefore, the users pay fairly.

\section{Analysis} 

In this section, we obtain results on the tradeoff between using data and main markets. First, we show the following:

\begin{proposition}\label{threshold_result}
There is a threshold $n^*$, such that for any $i>n^*$, a rollup $i$ uses the direct on L1 posting strategy.
\end{proposition}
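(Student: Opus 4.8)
The plan is to reduce each rollup's binary decision to a comparison of two \emph{optimized} per-transaction cost functions of its arrival rate $R$, and then to show this comparison is single-crossing in $R$. First I would fix a rollup with rate $R$ and, for each technology, optimize the posting interval $T$ (equivalently the batch/blob size $RT$). Under either strategy the per-transaction cost splits into an amortized posting cost plus a delay cost: if the rollup posts every $T$ time units and arrivals are uniform, the average delay is $T/2$, so the delay cost per transaction is $aT/2$. For the main market a batch of $RT$ transactions costs $(P_0 + P_1 RT)G$, giving per-transaction cost $\frac{P_0 G}{RT} + P_1 G + \frac{aT}{2}$; minimizing over $T$ yields $C_{\mathrm{main}}(R) = P_1 G + \sqrt{2 a P_0 G / R}$. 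For a blob the posting cost $P_0 G + B$ is independent of the fill level, giving $\frac{P_0 G + B}{RT} + \frac{aT}{2}$ and, after optimizing, $C_{\mathrm{blob}}(R) = \sqrt{2 a (P_0 G + B)/R}$.

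Second, I would compare the two. Writing $\Delta(R) = C_{\mathrm{main}}(R) - C_{\mathrm{blob}}(R) = P_1 G - \bigl(\sqrt{2a(P_0G+B)} - \sqrt{2aP_0G}\bigr) R^{-1/2}$, the crucial structural fact is that because $B>0$ the coefficient of $R^{-1/2}$ is a strictly negative constant, so $\Delta$ is strictly increasing in $R$, running from $-\infty$ as $R \to 0$ to the positive value $P_1 G$ as $R \to \infty$. Hence $\Delta$ has a unique zero $R^*$: rollups with $R > R^*$ strictly prefer the blob (data) market, and those with $R < R^*$ strictly prefer direct L1 posting. Economically this is exactly the single-crossing one expects, since the fixed blob premium $B$ is amortized away as $R$ grows, while the main market keeps charging the per-transaction amount $P_1 G$.

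Third, since the rollups are sorted with $R_i \ge R_{i+1}$, the rate threshold $R^*$ becomes an index threshold: setting $n^* = \max\{\, i : R_i \ge R^* \,\}$ (and $n^*=0$ if no such $i$ exists), every rollup $i > n^*$ has $R_i < R^*$ and therefore uses the direct L1 posting strategy, which is the claim.

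The step I expect to be the main obstacle is justifying that this clean single-crossing survives the two features glossed over above. The first is the blob capacity: a blob holds at most some fixed number $M$ of transactions, so for sufficiently large $R$ the optimal $T$ is truncated and $C_{\mathrm{blob}}(R)$ flattens toward $(P_0 G + B)/M$; I would verify that $R^*$ lies in the regime where the capacity does not bind (partial blobs), so the unconstrained formula is the relevant one near the threshold, and separately that the equilibrium $B$ is not so high that even full blobs are dominated for the largest rollups (otherwise $\Delta$ could turn negative again at very large $R$ and the partition would no longer be a clean suffix). The second is that $B$ is endogenous, since the set of blob-posters determines demand and hence the market-clearing $B$. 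For this proposition I would treat $B$ at its equilibrium value and only invoke monotonicity of $\Delta$ in $R$ for fixed $B$; consistency of $B$ with the resulting partition is a separate fixed-point argument.
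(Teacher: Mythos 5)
Your proposal is correct and follows essentially the same route as the paper: optimize the posting interval separately for the blob and main-market strategies, obtain the per-transaction costs $\sqrt{2a(P_0G+B)/R}$ and $P_1G+\sqrt{2aP_0G/R}$, and use the fact that their comparison is monotone in $R$ to get an index threshold under the sorting $R_i\ge R_{i+1}$. If anything, your explicit single-crossing argument via $\Delta(R)$ is cleaner than the paper's, which routes the same monotonicity through the indifference condition for $B$ and leaves the final step somewhat terse; your closing caveats (blob capacity, endogeneity of $B$) concern features the paper likewise defers to its extensions and equilibrium-computation sections, so they do not affect the proposition as stated.
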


\begin{proof}
Let us determine the optimal strategy for a single rollup, assuming the prices on both the data market and on L1 are fixed.
There are two strategies any rollup could use: (1) posting blobs or (2) posting directly on L1. The rollup finds a strategy that minimizes the cost within each of these strategies separately and then selects the strategy that has a lower cost. 

First, consider a blob-posting strategy. Suppose a rollup with generic rate $R$ posts a blob every time $t$. Then, a blob contains $Rt$ transactions, with the total cost (posting cost plus delay cost) of

\begin{equation}\label{total_blob_cost_t}
    T_B(t):=P_0G+B+\frac{aRt^2}{2}.
\end{equation}

The cost per transaction is: 

\begin{equation}\label{per_transaction_blob}
Tr_B(t):=\frac{T_B(t)}{Rt} = \frac{P_0G+B}{Rt}+\frac{at}{2}.    
\end{equation}

By the first order condition (FOC), $Tr_B(t)$ is minimized when



\begin{equation}\label{optimal_time}
    t_B := t =  \sqrt{\frac{2(P_0G+B)}{aR}}.
\end{equation}

Plugging $t_B$ obtained above in~\eqref{per_transaction_blob}, gives that the cost per transaction is

$$Tr_B(t) = \frac{P_0G+B}{R}\sqrt{\frac{aR}{2(P_0G+B)}}+\frac{1}{2}\sqrt{\frac{2(P_0G+B)a}{R}} = \sqrt{\frac{2(P_0G+B)a}{R}} = at_B,$$

and the number of transactions per blob is $C_B:=Rt_B = \sqrt{\frac{2(P_0G+B)R}{a}}.$

Next, we consider an {L1-posting strategy}.
Suppose a rollup posts a batch every time $t$, with $Rt$ transactions per batch.  The total cost of a batch is

\begin{equation}\label{total_batch_cost_t}
    T_E(t) := (P_0+RtP_1)G + \frac{aRt^2}{2},
\end{equation}

and the cost per transaction is

\begin{equation}\label{per_transaction_batch}
    Tr_E(t):=\frac{P_0G}{Rt}+P_1G+\frac{at}{2}.
\end{equation}

By the FOC, this is minimized when 
$t = \sqrt{2P_0G/(aR)}.$
Plugging the above value in~\eqref{per_transaction_batch} gives that the cost per transaction is:

$$Tr_E = \frac{P_0G}{R}\sqrt{\frac{aR}{2P_0G}}+P_1G+\frac{a}{2}\sqrt{\frac{2P_0G}{aR}} = \sqrt{\frac{2P_0Ga}{R}} + P_1G,$$ 

and the number of transactions per batch is $Rt = \sqrt{2P_0GR/a}.$

Now, we focus on the {indifference condition} between posting blobs and posting on L1.
That is, we solve for the value of $B$ that makes the rollup indifferent between the two strategies. For this, set costs per transaction in both cases equal: 

$$\sqrt{\frac{2(P_0G+B)a}{R}} = P_1G+\sqrt{\frac{2P_0Ga}{R}}.$$

Multiplying both sides by $\sqrt{R/(2a)}$ and squaring gives $P_0G+B = (\sqrt{R/2a}P_1G+\sqrt{P_0G})^2,$ 
or equivalently

$$B+P_0G = \frac{RP_1^2G^2}{2a}+2P_1G\sqrt{\frac{RP_0G}{2a}} + P_0G.$$

Canceling terms, we get

\begin{equation}\label{indifference_condition}
    B = \frac{RP_1^2G^2}{2a}+2P_1G\sqrt{\frac{RP_0G}{2a}}.
\end{equation}

For any finite $G$ and $B>0$, there are two scenarios. First, there is an index $n^*$ such that in~\eqref{indifference_condition}, the right-hand side is lower than the left-hand side, therefore, the rollup prefers to use a main market posting strategy, a contradiction. Second, if such an index does not exist, then we set the threshold equal to $n$. This finishes the proof of the proposition.  \qed

\end{proof}

Next, we demonstrate how to calculate the equilibrium price $B$ and the threshold in Proposition~\ref{threshold_result}.
For simplicity, suppose that $P_0=0$. 
Then, the time between posting for rollup $i$ is $t_i = \sqrt{\frac{2B}{aR_i}}$. The algorithm proceeds in two steps: 

 \begin{itemize}
     \item[$\bullet$] Step 1. Initialization:
        To hit the target of $k$ blobs per time unit, we require that
        $$k = \sum_{i=1}^{n} \frac{1}{t_i} = \sum_{i=0}^{n}\frac{\sqrt{aR_i}}{\sqrt{2B}}.$$
        Solving $B$ gives:
        \begin{equation}\label{price_formula}
        B = \frac{a(\sum_{i=1}^{n}\sqrt{R_i})^2}{2k^2}.            
        \end{equation}
        Let $\sum_{i=1}^{n}\sqrt{R_i}=:R$ is a positive real number. Plugging this value in~\eqref{indifference_condition}, gives the initial value on $m$ such that for a rollup with rate $R_m$, LHS of~\eqref{indifference_condition} is larger than RHS of~\eqref{indifference_condition}. 
    \item[$\bullet$] Step 2. 
    In the loop, we increase $m$ initialized in the previous step by one, calculate new equilibrium price $B$ with a set of rollups $\{1,2,...,n\}$, as long as the LHS of~\eqref{indifference_condition} is smaller than the RHS. Once we find a value of $k$, for which LHS is higher than the RHS, we output $B$ and $n^*=m$, as an answer.
\end{itemize}

Note that the condition in Step 2 may never be satisfied. In this case, $n^*$ is set to $n$. Intuitively, the calculation of $B$ in the initialization step assumes that all rollups use a blob posting strategy. However, it might be that some rollups under this price will not use a blob posting strategy, that is, it is an overestimation of the price. The second step fixes this potential overestimation by first excluding all small rollups and adding them one by one.

\begin{example}
Consider an example in which rates drop exponentially, that is, suppose $R_i=\frac{R_{i-1}}{2}$ for any $i\in \{1,...,n\}$. 

Assume that $G$ is very large, that is, all rollups use a blob posting strategy. To hit the target of $k$ blobs per time unit, we require that:

$$k = \sum_{i=0}^{n} \frac{1}{t_i} = \sqrt{\frac{aR_0}{2B}} \sum_{i=0}^{n} (\sqrt{2})^{-i} \approx \sqrt{\frac{aR_0}{2B}}\frac{\sqrt{2}}{\sqrt{2}-1} = \sqrt{\frac{aR_0}{B}}\frac{1}{\sqrt{2}-1}.$$

The approximation is taken by assuming a large enough value of $n$. We get an equivalent condition  $k(\sqrt{2}-1) \approx \sqrt{aR_0/B}.$ Solving for $B$ gives:

$$B \approx \frac{aR_0}{k^2(3-2\sqrt{2})}.$$

For $k=2$, an initial EIP-4844 target number of blobs per block, $B\approx 1.46aR_0,$ and  the time between posting for rollups is $1.71, 2.42, 3.42,...$
For $k=3$, a current EIP-4844 target, $B\approx 0.65aR_0,$ and  the time between posting for rollups is $1.14, 1.62, 2.27,...$

\end{example}

\section{Joining chains}

Suppose two rollups join forces in posting blobs. There are three different type of profiles of these rollups in the equilibrium derived above. In the first, both rollups use blob posting technology. In the second, one rollup uses blob posting technology, while the other uses the main market to post the data. In the third, both rollups use the main market for posting the data. In this section, we analyze what happens with the equilibrium price of the blob in these different scenarios and derive a cost-sharing scheme that satisfies certain reasonable properties. Let $B^{N}$ denote the new price in the equilibrium after two rollups join in posting blobs.

\medskip
\noindent \textbf{Case 1: } In this case, both rollups post the blobs in the initial equilibrium state. We obtain that the blob price in the new equilibrium state decreases, because of the blob price formula~\eqref{price_formula}. In the following, we show a result of how large this decrease can be.

\begin{proposition}\label{both_blob_posting}
    $B^{N}$ satisfies the following inequalities: $B\geq B^{N} \geq B/2$. 
\end{proposition}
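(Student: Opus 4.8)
The plan is to read off everything from the equilibrium price formula~\eqref{price_formula}, which shows that $B$ depends on the rollup profile only through the aggregate $\sum_i\sqrt{R_i}$. The single structural fact I need is how this aggregate changes when two rollups merge. If the two joining rollups have rates $R_j$ and $R_l$, then after merging they behave as one rollup of rate $R_j+R_l$, so in the sum the pair of terms $\sqrt{R_j}+\sqrt{R_l}$ is replaced by the single term $\sqrt{R_j+R_l}$. Writing $S:=\sum_{i\neq j,l}\sqrt{R_i}$ for the contribution of the untouched rollups, I would record
\begin{equation*}
B=\frac{a\,(S+\sqrt{R_j}+\sqrt{R_l})^2}{2k^2},\qquad B^{N}=\frac{a\,(S+\sqrt{R_j+R_l})^2}{2k^2}.
\end{equation*}
Hence the entire claim reduces to the two-sided estimate $\tfrac{1}{\sqrt2}\,U\le V\le U$ between $U:=S+\sqrt{R_j}+\sqrt{R_l}$ and $V:=S+\sqrt{R_j+R_l}$, since $B^{N}/B=(V/U)^2$.

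Next I would isolate the elementary scalar inequality driving both bounds: for positive $x,y$,
\begin{equation*}
\tfrac{1}{\sqrt2}\bigl(\sqrt{x}+\sqrt{y}\bigr)\le\sqrt{x+y}\le\sqrt{x}+\sqrt{y}.
\end{equation*}
The right-hand inequality is subadditivity of the square root (square both sides and drop the nonnegative cross term $2\sqrt{xy}$). The left-hand inequality is exactly AM--GM: squaring reduces it to $\tfrac12(x+y+2\sqrt{xy})\le x+y$, i.e. $2\sqrt{xy}\le x+y$. Applying this with $x=R_j$, $y=R_l$ controls the only terms in which $U$ and $V$ differ.

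To assemble the result, the upper bound $B^{N}\le B$ is immediate: the right inequality gives $\sqrt{R_j+R_l}\le\sqrt{R_j}+\sqrt{R_l}$, hence $V\le U$ and $B^N\le B$. For the lower bound I would chain
\begin{equation*}
V=S+\sqrt{R_j+R_l}\ \ge\ S+\tfrac{1}{\sqrt2}\bigl(\sqrt{R_j}+\sqrt{R_l}\bigr)\ \ge\ \tfrac{1}{\sqrt2}\Bigl(S+\sqrt{R_j}+\sqrt{R_l}\Bigr)=\tfrac{1}{\sqrt2}\,U,
\end{equation*}
where the first step is the AM--GM half above and the second uses $S\ge\tfrac{1}{\sqrt2}S$. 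Squaring $V\ge U/\sqrt2$ yields $B^{N}\ge B/2$.

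I do not expect a genuine obstacle; the content collapses to AM--GM once the merge is expressed at the level of $\sum_i\sqrt{R_i}$. The two points I would be careful to justify are (i) that formula~\eqref{price_formula} still applies after merging, i.e. the combined rollup keeps posting blobs — this is safe in Case~1 because both rollups already post blobs and consolidation only lowers their per-transaction cost, so the blob strategy remains optimal; and (ii) sharpness, which delimits the bounds: the lower bound is attained as $S\to 0$ with $R_j=R_l$ (equality in AM--GM), while the upper bound is approached as one of the two rates tends to $0$.
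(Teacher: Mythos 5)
Your proposal is correct and follows essentially the same route as the paper: both express $B$ and $B^N$ via formula~\eqref{price_formula} with the common term $S=\sum_{i\neq j,l}\sqrt{R_i}$, and both bounds reduce to subadditivity of the square root and the AM--GM inequality $\sqrt{R_j}+\sqrt{R_l}\le\sqrt{2}\sqrt{R_j+R_l}$. Your handling of the lower bound is in fact slightly more careful than the paper's, which states the reduction to the two-rollup case as an equivalence rather than as the sufficiency you spell out in your chained inequality.
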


\begin{proof}
    Assume that the two rollups joining in the blob posting are indexed  $i$ and $j$. Then, $B$ can be rewritten as $B=c(\sqrt{R_i}+\sqrt{R_j}+t)^2$ and $B^N=c(\sqrt{R_i+R_j}+t)^2$, where $c=\frac{1}{2k^2}$ and $t=\sum_{k\neq i,j}\sqrt{R_k}$. $B\geq B^N$ is equivalent to $\sqrt{R_i}+\sqrt{R_j}\geq \sqrt{R_i+R_j}$, that trivially holds for any $R_i,R_j>0$. The second inequality, $B^N\geq B^{N}/2$, is equivalent to $\sqrt{2}\sqrt{R_1+R_2}\geq \sqrt{R_1}+\sqrt{R_2}$. The latter is equivalent to $(R_1-R_2)^2\geq 0$, which holds trivially. 
\end{proof}

From the proof above we see that the equality in $B^N=B/2$ holds if and only if there are only two rollups and their transaction rates are equal. For obtaining the corner solution, it is also implicitly assumed that no other rollup joins the blob posting strategy, in the new state with a lower price. 

\medskip
\noindent \textbf{Case 2: } In this case, one rollup posts blobs, and the other posts on the main market in the initial arrangement. Joining blob posting pushes the price of the blob posting up, assuming that no rollup stops using blob posting technology. In the following, we derive an upper bound on the price increase.

\begin{proposition}\label{different_strategies}
        $B^{N}$ satisfies the following inequalities: $2B\geq B^{N}\geq B$. 
\end{proposition}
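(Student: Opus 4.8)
The plan is to reuse the algebraic framework from the proof of Proposition~\ref{both_blob_posting}, adapting it to the asymmetric situation of Case~2 where, in the initial equilibrium, one of the two merging rollups is already a blob poster while the other uses the main market. I would label the blob poster $i$ and the main-market poster $j$. By the threshold structure of Proposition~\ref{threshold_result}, the blob posters are exactly the higher-rate rollups and the main-market posters the lower-rate ones, so $R_i\geq R_j$; this single order relation is what drives the upper bound. Writing $c=\frac{a}{2k^2}$ and letting $t:=\sum_{\ell}\sqrt{R_\ell}$ be the sum of $\sqrt{R_\ell}$ over all \emph{other} blob-posting rollups, the essential difference from Case~1 is that in the initial state rollup $j$ does \emph{not} contribute to the price, since it posts on the main market. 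Hence, by~\eqref{price_formula},
$$B = c\bigl(\sqrt{R_i}+t\bigr)^2, \qquad B^{N} = c\bigl(\sqrt{R_i+R_j}+t\bigr)^2,$$
where after merging the combined entity blob-posts with the pooled rate $R_i+R_j$. As in Case~1, I would assume the set of blob posters is otherwise unchanged; since the price will rise, this is precisely the stated hypothesis that no rollup abandons blob posting.

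For the lower bound $B^{N}\geq B$, I would divide out $c$ and take square roots, reducing the claim to $\sqrt{R_i+R_j}\geq\sqrt{R_i}$, which holds for any $R_j>0$. This also confirms that folding a main-market rollup into a shared blob genuinely pushes the price up, matching the discussion preceding the statement.

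For the upper bound $B^{N}\leq 2B$, dividing by $c$ and taking square roots reduces the claim to $\sqrt{R_i+R_j}+t\leq\sqrt{2}\,\bigl(\sqrt{R_i}+t\bigr)$, that is,
$$\sqrt{R_i+R_j}\;\leq\;\sqrt{2}\,\sqrt{R_i}+(\sqrt{2}-1)\,t.$$
Since $t\geq 0$ and $\sqrt{2}-1>0$, it suffices to establish $\sqrt{R_i+R_j}\leq\sqrt{2}\,\sqrt{R_i}$, equivalently $R_i+R_j\leq 2R_i$, equivalently $R_j\leq R_i$ — exactly the order relation noted above.

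The hard part is conceptual rather than computational: one must correctly recognize that $j$'s rate is absent from the initial price sum (this is what distinguishes Case~2 from Case~1 and is responsible for the price increasing rather than decreasing), and one must invoke Proposition~\ref{threshold_result} to justify $R_i\geq R_j$, without which the factor-of-two bound can genuinely fail — for instance, with only two rollups ($t=0$) and $R_j>R_i$ one would have $B^{N}/B=1+R_j/R_i>2$. Once these two facts are in place, both inequalities collapse to one-line estimates. I would also remark, in parallel with the equality discussion after Proposition~\ref{both_blob_posting}, that $B^{N}=2B$ is only approached in the degenerate limit $t=0$ with $R_j\to R_i$, which is not attained since equal rates would place both rollups on the same side of the threshold.
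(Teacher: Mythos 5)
Your proposal is correct and follows essentially the same route as the paper's own proof: the same price decomposition $B=c(\sqrt{R_i}+t)^2$ versus $B^N=c(\sqrt{R_i+R_j}+t)^2$, the same reduction of the upper bound to $\sqrt{R_i+R_j}\leq\sqrt{2}\sqrt{R_i}$, and the same appeal to the threshold ordering to get $R_i\geq R_j$. Your added remarks on why the factor of two can fail without that ordering and on tightness are sensible elaborations but not a different argument.
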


\begin{proof}
Assume that the rollup that posts blobs is indexed $i$, that is, its transaction rate is $R_i$ and the rollup that posts calldata at the main market has a transaction rate $R$. Then, the old blob price in the equilibrium is equal to $B=c(\sqrt{R_i}+t)^2$, where $c=\frac{1}{2k^2}$ and $t=\sum_{k\neq i}\sqrt{R_k}$. The new price, on the other hand, is equal to $B^N=c(\sqrt{R_i+R}+t)^2$. It is obvious that $B^N\geq B$, since no rollup stops using blob posting strategy. $2B\geq B^N$ is equivalent to $\sqrt{2}(\sqrt{R_i}+t) \geq \sqrt{R_i+R}+t$. It is sufficient to show that $\sqrt{2}\sqrt{R_i}\geq \sqrt{R_i+R}$, equivalent to $R_i\geq R$. The latter holds because the rollup with transaction rate $R_i$ posts blobs in the equilibrium and has a higher transaction rate than the rollup with $R$ rate, posting on the main market. 

\end{proof}

\medskip
\noindent \textbf{Case 3: } In this case, both rollups use the main market for posting the data in the initial setting. Assume they join in posting blobs and no rollup stops using blob posting technology. Then, the new blob price $B^N$ in the equilibrium increases. We obtain the following upper bound on the increase. 

\begin{proposition}\label{both_main_market}
    $B^N$ satisfies the following inequalities: $2B\geq B^{N}\geq B$. 
\end{proposition}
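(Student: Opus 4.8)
The plan is to follow the template of the previous two cases only for the easy direction, and to handle the hard direction with the economic threshold rather than a bare algebraic inequality. Throughout I use the standing assumption $P_0=0$, so that by~\eqref{price_formula} the equilibrium price can be written $B=c\,t^2$, where $t=\sum_k\sqrt{R_k}$ is summed over the rollups that post blobs and $c$ is the constant appearing in~\eqref{price_formula}. Let the two joining rollups have rates $R$ and $R'$. Since both use the main market in the initial equilibrium, neither contributes to $t$, so $B=c\,t^2$; whereas after they merge into one blob-posting entity of rate $R+R'$ the new price is $B^N=c(\sqrt{R+R'}+t)^2$.

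The lower bound $B^N\geq B$ is immediate: the merged rollup adds the positive term $\sqrt{R+R'}$ to the aggregate $\sum_k\sqrt{R_k}$, while by hypothesis no rollup abandons blob posting, so $B^N=c(\sqrt{R+R'}+t)^2\geq c\,t^2=B$ by~\eqref{price_formula}.

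For the upper bound $B^N\leq 2B$ I would first record a threshold fact. With $P_0=0$ the indifference condition~\eqref{indifference_condition} reads $B=\tfrac{RP_1^2G^2}{2a}$, so at a prevailing price $p$ a rollup posts blobs precisely when its rate exceeds $R^*(p):=\tfrac{2ap}{P_1^2G^2}$, a threshold that is \emph{linear} in the price. I would stress here that, unlike Cases~1 and~2, the naive reduction of $2B\geq B^N$ to $\sqrt2\,t\geq\sqrt{R+R'}+t$ is false when $t$ is small, so the threshold structure is genuinely needed. Now apply $R^*$ twice: since each of the two rollups chose the main market at the old price $B$, we have $R<R^*(B)$ and $R'<R^*(B)$, hence $R+R'<2R^*(B)$; and since the merged rollup posts blobs at the new price $B^N$, we have $R+R'\geq R^*(B^N)$. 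Using linearity, $R^*(B^N)=\tfrac{B^N}{B}R^*(B)$, so chaining the two inequalities gives $\tfrac{B^N}{B}R^*(B)\leq R+R'<2R^*(B)$, and dividing by $R^*(B)>0$ yields $B^N<2B$.

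The main obstacle is exactly the point flagged above: the factor-$2$ bound does not follow from subadditivity of the square root as in the earlier cases, because here both joining rollups are absent from the initial aggregate $t$, so there is no single $\sqrt{R_i}$ term to compare against. The essential ingredients are (i) that each joining rollup lies below the blob-posting threshold at the old price while their merger lies above it at the new price, and (ii) that this threshold is proportional to the price when $P_0=0$, which is what lets the two threshold comparisons combine into the clean ratio bound $B^N/B<2$. A secondary point to verify is that the merged rollup indeed posts blobs at $B^N$, i.e. $R+R'\geq R^*(B^N)$, which is precisely the equilibrium-consistency condition underlying the statement's hypothesis that the two rollups join in blob posting.
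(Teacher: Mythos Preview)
Your argument is correct. Both you and the paper invoke the hypothesis that no existing blob-poster leaves after the merge, but you exploit it differently. The paper compares the merged rate to the largest blob-poster's rate $R_1$, arguing that $R^1+R^2\ge R_1$ would force rollup~$1$ off the data market at the new price, and then reads off the factor-$2$ bound by analogy with Proposition~\ref{different_strategies}. You instead compare the merged rate directly to the indifference threshold $R^*(p)=2ap/(P_1^2G^2)$: each joining rollup lies below $R^*(B)$, their merger lies above $R^*(B^N)$, and because $R^*$ is \emph{linear} in the price when $P_0=0$ these chain into $B^N/B<2$. Your route is more self-contained and makes transparent where the factor~$2$ comes from (two sub-threshold rates summing to less than twice the threshold); it also sidesteps the somewhat delicate step of converting the paper's bound $R^1+R^2<R_1$ into $B^N\le 2B$, which does not follow from the square-root algebra of Proposition~\ref{different_strategies} alone when the set of blob-posters is small.
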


\begin{proof}
    The proof is similar to the proof of proposition~\ref{different_strategies}. Two rollups with transaction rates $R^1$ and $R^2$ that were posting their data in the main market, can reach a level that is almost $2R_1$. In fact, as long as $R^1+R^2\geq R_1$, the rollup with rate $R_1$ would not post blobs in the new equilibrium, a contradiction with the assumption. This gives an upper bound of $2B$ on the new equilibrium price. 
\end{proof}

\subsection{Cost Sharing}

Suppose there are two chains, with transaction arrival rates $R_L = R$, from now on referred to as large rollup, and $R_S = Rf$, referred to as small rollup. $0<f<1$ is a real number. Assume $P_0=0$, that is, there are no metadata costs.   
We use the same model as above. Let $Tr_L$ and $Tr_S$ denote costs per transaction of large and small rollups, respectively, $C_L$ and $C_S$ denote the total number of transactions posted by large and small rollups separately.

In the following, for the illustration of calculating these parameters above, in this and the next subsections, we assume that both rollups use a blob posting strategy. 
If the big chain is the only one using the blob space and it posts every $t_L$ time, then optimal posting time is $t_L = \sqrt{2B/(aR)},$
with a cost per transaction of $Tr_L=\sqrt{{2Ba}/{R}}$ and the total number of transactions in blob $C_L=\sqrt{2BR/a}$.

For the {small chain only}, we have optimal posting time  $t_S = \sqrt{2B/(aRf)},$
with a cost per transaction of  $Tr_S=\sqrt{{2Ba}/{(Rf)}}$ and the total number of transaction in the blob is $C_S=\sqrt{2BRf/a}$.

If the two chains post their blobs separately, the small chain has a higher per transaction cost, by a factor of $1/\sqrt{f}$ per transaction, since $t_S/t_R = 1/\sqrt{f}$.  Because the large chain has more transactions, it pays more overall, by a factor of $1/\sqrt{f}$.

A joint blob is also posted so that it minimizes cost per transaction. This is a Pareto efficient approach, as otherwise, both rollups could agree to deviate to the optimal strategy and share added value in any way. Based on the analysis in the proof of proposition, a joint blob is posted every
$t_J = \sqrt{\frac{2B^N}{(1+f)aR}}$.  The total cost per blob is:

$$B^N + \frac{a(1+f)R}{2}\cdot\frac{2B^N}{(1+f)aR} = 2B^N.$$

In all three cases, the total cost per blob is equal to $2$ times the blob price. 
A cost per transaction of  $Tr_J=\sqrt{{2B^Na}/{(R(1+f))}}$ and the total number of transaction in the blob is $C_J=\sqrt{2B^NR(1+f)/a}$.

First, note that if $Tr_J>Tr_L$, the rollups will not join in posting blobs together, as it is not profitable for a large rollup. Therefore, an interesting case is when $Tr_J<Tr_L\leq Tr_S$. In propositions~\ref{both_blob_posting},~\ref{different_strategies} and~\ref{both_main_market}, we obtained that $2B\geq B^N\geq B/2$. That is, the relation between $Tr_J$ and $Tr_L$ can be arbitrary. We discuss a suitable cost-sharing rule in the next section.

\subsection{The Nash Bargaining Solution}

In this section, we take an axiomatic approach to the cost-sharing rule between rollups that decide to post blobs together. One such approach is suggested by the Nash Bargaining solution. First, we introduce the required notation and then reduce the cost-sharing rule to solving the Nash Bargaining problem. Let $A$ denote the set of all possible bargaining outcomes. In particular, $D\in A$ is the outcome if no agreement can be reached. In our setting, $A$ is interpreted as a set of cost-sharing options between two rollups, while $D$ is interpreted as a case when rollups post their blobs separately.
The utility (payoff) function of agent $i$ is given by 
$u_i: A \rightarrow \mathbb{R}.$
We consider linear utility functions, in particular. Let $S$ denote the set of all possible utilities (payoffs):
$$S=\left\lbrace (s_1,s_2)\mid s_1=u_1(a),s_2=u_2(a),a\in A\right\rbrace.$$
Let $s:=(s_1,s_2)$. 
 Further, let $d=(d_1,d_2)=\left( u_1\left( D\right),u_2\left( D\right) \right)$ be the utility vector if no agreement could be reached   (threat point). 
Two Requirements on $S$ to have a characterization: 1) There exists $s\in S$ with $s_i>d_i$  $\forall i$, and 2) $S$ is compact and convex.
Our set satisfies these properties, as we will see later. 
Then, define
    $S'=\left\lbrace s\mid s_i \geq d_i \hspace{0.2cm}\forall 
      i\right\rbrace \subseteq S.$
Let $H(S)$ denote the set of {\em Pareto-optimal} outcomes. {\em Pareto-optimal} (or {\em Pareto-efficient}) means that there is no other outcome that makes one player better off without making another player worse off. In our case, this means that rollups pay completely for the blob posting cost and do not overpay. Now, we are in the position to define a Nash Bargaining Solution.

\begin{definition}
    A {\bf bargaining solution} is a rule that assigns a solution
    vector $f(S,d)\in S$ 
    to every bargaining problem $B=(S,d)$.
  \end{definition}

Let $f_i(S,d)$ denote the $i$-component of $f(S,d)$. 
    That is:
    $f(S,d)=\left(f_1 \left(S,d\right),f_2 \left(S,d\right)\right).$
We have the following 4 axioms. 

\begin{axiom}[\bf Invariance of Utility Scaling]

If there are two bargaining situations 
        $B=(S,d)$ and $\bar{B}=(\bar{S},\bar{d})$ with
        $\bar{S}=\{\alpha_1 s_1 + \beta_1, \alpha_2 s_2 + \beta_2: \,\, 
        s\in S\}$
        and $\bar{d}_i=\alpha_i d_i+ \beta_i \,\,\, \forall i,$
        where $\alpha_1,\alpha_2>0$.
      Then, for the solution, the following holds:
             $f_i(\bar{S},\bar{d})=\alpha_i f_i(S,d)+\beta_i \,\,\,\forall i.$ 
\end{axiom}

The axiom states that if we change the way we measure utility when we construct a bargaining problem but keep new
  utility scales decision-theoretically equivalent to the old ones, then the bargaining solution in utility-allocation space
  changes in the same way, so that it still corresponds to the same real outcome.
  
\begin{axiom}[\bf Pareto Optimality]
If $f(S,d)$ is a solution to $B=(S,d)$, then $f(S,d)\in H(S)$.
\end{axiom}

The axiom states that there is no other feasible allocation that is better than the solution for one player and not worse than the solution
  for the other player. For the next axiom, we need a definition: 

\begin{definition}
    A game is called symmetric if two conditions hold: (1) $d_1=d_2$, and (2) $(s_1,s_2)\in S$, then $(s_2,s_1)\in S$.
\end{definition}

\begin{axiom}[\bf Symmetry]
If $(S,d)$ is symmetric, then the solution is also symmetric, i.e. $f_1(S,d)=f_2(S,d).$    
\end{axiom}
      
The axiom states that, if the positions of players $1$ and $2$ are completely symmetric in the bargaining problem, then the solution also treats them symmetrically.

\begin{axiom}[\bf Independence of Irrelevant Alternatives]
Let $(S,d)$ and $(T,d)$ be two bargaining situations with $S\subset T$ and $f(T,d)\in S$, then $f(T,d)=f(S,d)$.
\end{axiom}
 
The axiom states that eliminating feasible alternatives (other than the threat point) that would not have been chosen does not affect the result.

\begin{theorem}[\cite{nash1950bargaining}]
 There is a unique bargaining solution, $f^N$, satisfying the 
 four axioms above and it has the 
 following representation for every two-person bargaining problem:
 \begin{equation}\label{Nash_optimization}
  f^N(S,d)=\arg \max_{s\in H(S)}(s_1-d_1)(s_2-d_2)=s^*.     
 \end{equation}
\end{theorem}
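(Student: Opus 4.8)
The plan is to prove the theorem in two halves: first establish \emph{existence} by verifying that the map $f^N$ defined through the Nash product in~\eqref{Nash_optimization} satisfies all four axioms, and then establish \emph{uniqueness} by showing that any axiom-satisfying solution must coincide with it. Before either half, I would check that $s^*$ is well defined: compactness of $S$ (Requirement~2) makes the continuous product $(s_1-d_1)(s_2-d_2)$ attain a maximum, the existence of some $s\in S$ with $s_i>d_i$ (Requirement~1) makes this maximum strictly positive, and convexity of $S$ together with strict quasi-concavity of the product on the region where both factors are positive makes the maximizer unique.

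For existence, each axiom is a short verification. Pareto optimality is immediate, since the maximum is taken over $H(S)$ and, equivalently, raising either coordinate strictly raises a positive product. Symmetry holds because when $d_1=d_2$ and $S$ is swap-invariant the objective is symmetric in $s_1,s_2$, forcing the unique maximizer onto the diagonal. Independence of Irrelevant Alternatives holds because a maximizer over a larger set $T$ that happens to lie in $S\subset T$ is a fortiori a maximizer over $S$. Finally, Invariance follows from the identity $(\alpha_1 s_1+\beta_1-\alpha_1 d_1-\beta_1)(\alpha_2 s_2+\beta_2-\alpha_2 d_2-\beta_2)=\alpha_1\alpha_2\,(s_1-d_1)(s_2-d_2)$, so a positive affine rescaling leaves the argmax unchanged up to the same rescaling.

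The harder half is uniqueness, where the four axioms must be used in concert. Let $f$ be any solution satisfying them and fix a problem $(S,d)$ with Nash point $s^*$. Using Invariance I would apply the affine map sending $d\mapsto(0,0)$ and $s^*\mapsto(1,1)$, which is legitimate because both $s^*_i-d_i$ are positive, so the scaling factors $\alpha_i=1/(s^*_i-d_i)$ are admissible. In the normalized coordinates the objective becomes $s_1 s_2$, whose maximizer over the convex set $S'$ is $(1,1)$; the first-order variational inequality for this maximum, with gradient $(1,1)$ at the point $(1,1)$, is exactly that $S'\subseteq\{s_1+s_2\le 2\}$, i.e. the line $s_1+s_2=2$ supports $S'$ at $(1,1)$. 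The crucial construction is then to enclose $S'$ in a symmetric, compact, convex set $T$ sharing this Pareto point, for concreteness the truncated triangle $\{(s_1,s_2): s_1+s_2\le 2,\ s_1\ge -M,\ s_2\ge -M\}$ with $M$ large enough that $S'\subseteq T$.

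On the problem $(T,0)$, Symmetry together with Pareto optimality forces the solution onto the unique symmetric point of the frontier $s_1+s_2=2$, namely $(1,1)$, so $f(T,0)=(1,1)$. Since $S'\subseteq T$ and this common solution already lies in $S'$, Independence of Irrelevant Alternatives gives $f(S',0)=f(T,0)=(1,1)=s^*$ in normalized coordinates, and applying Invariance once more to undo the affine map yields $f(S,d)=s^*=f^N(S,d)$. I expect the main obstacle to be precisely this uniqueness step, and within it the geometric setup: justifying the supporting-hyperplane claim from the optimality of $s^*$ and choosing $T$ so that it is simultaneously symmetric, compact, convex, a superset of $S'$, and tangent to the same frontier at $(1,1)$. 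Once that geometry is in place, Symmetry and the two remaining axioms close the argument almost mechanically.
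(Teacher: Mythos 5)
The paper does not prove this statement: it is Nash's 1950 bargaining theorem, quoted with a citation, and the body of the paper only uses the representation~\eqref{Nash_optimization} as a black box. Your reconstruction is the standard textbook proof and is essentially correct: well-posedness of the maximizer from compactness, positivity, and convexity; the four axiom verifications for existence; and for uniqueness the normalization $d\mapsto(0,0)$, $s^*\mapsto(1,1)$, the supporting line $s_1+s_2=2$, the enclosing symmetric set $T$, and the chain Symmetry $\Rightarrow$ Pareto $\Rightarrow$ IIA $\Rightarrow$ Invariance. Two points deserve tightening. First, the ``first-order variational inequality'' step is not automatic because $s_1s_2$ is not concave on $\mathbb{R}^2$; the clean argument is that if some $s\in S$ had $s_1+s_2>2$, then by convexity the segment from $(1,1)$ to $s$ lies in $S$, and points on it near $(1,1)$ have both coordinates positive and product exceeding $1$ (the derivative of $t\mapsto(1+t(s_1-1))(1+t(s_2-1))$ at $t=0$ is $s_1+s_2-2>0$), contradicting maximality. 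This argument also shows that all of $S$, not merely $S'$, lies below the line, which matters for the second point: the theorem concerns $f(S,d)$, but you apply IIA to $(S',0)$; you should either choose $T\supseteq S$ and apply IIA directly to the pair $(S,0)\subset(T,0)$, or justify the reduction from $S$ to $S'$ separately. With those repairs the argument is complete, and it supplies a proof the paper deliberately omits.
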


The expression $(s_1-d_1)(s_2-d_2)$ is called the Nash Product. Suppose the large rollup is indexed by $1$ and the small rollup is indexed by $2$. The disagreement point in our setting is $(d_1,d_2) = (Tr_L, Tr_S)$. Assuming that posting a joint blob is profitable, that is, $Tr_J<Tr_L$, the rollups need to decide how to share the new blob price $B^N$. Suppose the large rollup pays $B_1$ and the small rollup pays $B_2$, with $B_1,B_2\geq 0$. Then, we can redefine their per-transaction costs, which define points $(s_1,s_2)$ in the payoff set. This defines a two-dimensional space. However, note that because of the Pareto efficiency, $B_1+B_2=B$ holds, since underpaying is not an option, and overpaying is not efficient. Therefore, we are down to $1$ dimensional space, as $B_1\in [0,B^N]$ defines it fully.

The number of large and small rollup transactions in the blob are denoted by $C_{J,L}$ and $C_{J,S}$, respectively. Since their rate ratio is $\frac{1}{f}$, we have $C_{J,L}=\frac{C_J}{1+f}$ and $C_{J,S} = \frac{C_J f}{1+f}$, so that they sum up to $C_S$. Let $d_{J,L}$ and $d_{J,S}$ denote the total delay costs of the large and small rollups, respectively. Then $d_{J,L}=\frac{aRt_J^2}{2}$ and $d_{J,S}=\frac{aRft_J^2}{2}$. Having settled all necessary parameters, we proceed to calculate $s_1$ and $s_2$ values for given $B_1$. $s_1$ is calculated as $s_1 = \frac{B_1+d_{J,L}}{C_{J,L}}$ and $s_2$ is calculated as $s_2=\frac{B^N-B_1+d_{J,S}}{C_{J,S}}$. Plugging in $s_1$ and $s_2$ in~\eqref{Nash_optimization}, and simplification by getting rid of constant denominators gives the following optimization problem:

\begin{equation}\label{optimize_payment}
 \arg\max_{B_1}(B_1+d_{J,L}-C_{J,L}Tr_L)(B^N-B_1+d_{J,S}-C_{J,S}Tr_S).   
\end{equation}

Since~\eqref{optimal_payment} is a negative quadratic polynomial in $B_1$, we solve the optimal value by the first order condition with respect to $B_1$:

\begin{equation}\label{optimal_payment}
    B_1= (B^N+d_{J,S}-d_{J,L}-C_{J,S}Tr_S+C_{J,L}Tr_{L})/2.
\end{equation}

The solution of~\eqref{optimal_payment} directly gives a sharing rule for a blob cost and also determines per transaction costs for both rollups. Note that it is only a function of $f$, and therefore, the contract between the rollups can be easily automatized. 
The axiomatic approach of this section can be easily generalized to $m>2$ rollups, by taking a Nash product over $m$ rollups $(s_1-d_1)(s_2-d_2)\cdot...\cdot (s_m-d_m)$. However, the optimization problem at hand can be much harder to solve, as it is $m-1$ dimensional. The number of dimensions comes from $m-1$ rollups' contributions towards the final blob cost. The last rollup contribution is determined by the contributions of the rest.

First, we show a structural result that will come in handy later. The result is similar to the one in proposition~\ref{both_blob_posting}, in that we lower bound the new equilibrium price in terms of the original equilibrium price $B$, and a parameter $f$.

\begin{lemma}\label{structural}
    $\frac{B^N}{B}\geq \frac{1+f}{(1+\sqrt{f})^2}$.
\end{lemma}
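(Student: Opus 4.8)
The plan is to express both $B$ and $B^N$ using the equilibrium price formula~\eqref{price_formula} specialized to the two joining rollups, exactly as in the proof of Proposition~\ref{both_blob_posting}. Writing $R_L=R$ and $R_S=Rf$, letting $t=\sum_{k\neq L,S}\sqrt{R_k}\geq 0$ collect the contributions of all other rollups, and $c=\frac{1}{2k^2}$, we have
\begin{equation*}
B=c\left(\sqrt{R}+\sqrt{Rf}+t\right)^2, \qquad B^N=c\left(\sqrt{R(1+f)}+t\right)^2.
\end{equation*}
Factoring $\sqrt{R}$ out of the radicals, I would set $a:=\sqrt{R(1+f)}=\sqrt{R}\sqrt{1+f}$ and $b:=\sqrt{R}\,(1+\sqrt{f})$, so that $B^N=c(a+t)^2$ and $B=c(b+t)^2$. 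Since all quantities are positive, the claim $\frac{B^N}{B}\geq\frac{1+f}{(1+\sqrt{f})^2}$ is equivalent, after taking square roots, to
\begin{equation*}
\frac{a+t}{b+t}\geq\frac{a}{b},
\end{equation*}
because $\frac{a}{b}=\frac{\sqrt{1+f}}{1+\sqrt{f}}$, and squaring the right-hand side returns exactly $\frac{1+f}{(1+\sqrt{f})^2}$.

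Next I would verify the single structural fact the argument relies on, namely $b\geq a$, i.e. $1+\sqrt{f}\geq\sqrt{1+f}$. Squaring both sides reduces this to $1+2\sqrt{f}+f\geq 1+f$, that is $2\sqrt{f}\geq 0$, which holds for all $f\in(0,1)$. With $b\geq a>0$ and $t\geq 0$ in hand, the reduced inequality is a mediant-type estimate: cross-multiplying by the positive quantities $b$ and $b+t$ turns $\frac{a+t}{b+t}\geq\frac{a}{b}$ into $b(a+t)\geq a(b+t)$, which simplifies to $t(b-a)\geq 0$ and is therefore immediate. Intuitively, adding the same nonnegative amount $t$ to the numerator and denominator of a fraction that is at most $1$ can only push it closer to $1$.

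Finally, I would record the equality case, which pins down when the bound is tight: since the only inequality used is $t(b-a)\geq 0$ and $b>a$ strictly for $f\in(0,1)$, equality holds if and only if $t=0$, i.e. when the two joining rollups are the only ones posting blobs.

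\emph{Main obstacle.} I expect no serious obstacle here: the entire content is the reduction to $\frac{a+t}{b+t}\geq\frac{a}{b}$ followed by the elementary fact $1+\sqrt{f}\geq\sqrt{1+f}$. The only point requiring care is the observation that the target ratio $\frac{1+f}{(1+\sqrt{f})^2}$ is precisely $(a/b)^2$, i.e. it equals the value of $B^N/B$ in the degenerate case $t=0$. Recognizing this makes clear that the lemma merely asserts monotonicity of the ratio $B^N/B$ in the aggregate rate $t$ of the remaining rollups, so that the stated bound is exactly the value obtained by ignoring those rollups.
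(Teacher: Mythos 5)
Your proof is correct and follows essentially the same route as the paper: both reduce the bound to the degenerate case $t=0$ where only the two joining rollups post blobs. The one difference is that the paper merely asserts that the ratio $B^N/B$ is minimized at $t=0$, whereas you actually justify this via the mediant inequality $\frac{a+t}{b+t}\geq\frac{a}{b}$ for $b\geq a$, so your write-up is the more complete version of the same argument.
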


\begin{proof}
    Similar to the proof of proposition~\ref{both_blob_posting}, the ratio between $B^N$ and $B$ is minimized if there are only two rollups posting blobs. Then, in this case, by~\eqref{price_formula} we have that $B^N/B = \frac{\sqrt{R+Rf}}{(R+\sqrt{Rf})^2}=\frac{1+f}{(1+\sqrt{f})^2}$, which finishes the proof. 
\end{proof}

Next, we show that the Nash bargaining outcome the large rollup to pay, is always an internal value:

\begin{proposition}\label{internal_solution}
    $B_1\in (0,B^N)$ for any $f<1$.
\end{proposition}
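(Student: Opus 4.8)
The plan is to substitute the explicit expressions for the delay costs and transaction counts into the closed-form solution~\eqref{optimal_payment}, collapse it into a single clean formula for $B_1$, and then reduce the claim $B_1\in(0,B^N)$ to a scalar inequality on the ratio $B^N/B$ that Lemma~\ref{structural} already controls from below.

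First I would evaluate each ingredient of~\eqref{optimal_payment} in terms of $B$, $B^N$ and $f$; the parameters $a$ and $R$ should cancel throughout. Using $t_J^2 = 2B^N/((1+f)aR)$ gives $d_{J,L} = B^N/(1+f)$ and $d_{J,S} = fB^N/(1+f)$, so $d_{J,S} - d_{J,L} = -(1-f)B^N/(1+f)$. For the remaining cross terms, writing $C_{J,L} = \sqrt{2B^N R/a}/\sqrt{1+f}$ together with $Tr_L = \sqrt{2Ba/R}$ yields $C_{J,L}Tr_L = 2\sqrt{BB^N}/\sqrt{1+f}$, and similarly $C_{J,S}Tr_S = 2\sqrt{f}\,\sqrt{BB^N}/\sqrt{1+f}$. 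Plugging these into~\eqref{optimal_payment} and simplifying $B^N - (1-f)B^N/(1+f) = 2fB^N/(1+f)$ collapses the formula to
\[
 B_1 = \frac{fB^N}{1+f} + \frac{(1-\sqrt{f})\sqrt{BB^N}}{\sqrt{1+f}}.
\]
With this in hand the lower bound $B_1>0$ is immediate: for $0<f<1$ both summands are strictly positive, since $1-\sqrt{f}>0$ and $B,B^N>0$.

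For the upper bound $B_1<B^N$ I would isolate the square-root term. The inequality is equivalent to $(1-\sqrt{f})\sqrt{BB^N}/\sqrt{1+f} < B^N/(1+f)$, and after dividing by $\sqrt{B^N}$, multiplying by $\sqrt{1+f}$ and squaring (all quantities positive) this becomes the purely scalar condition $B^N/B > (1+f)(1-\sqrt{f})^2$. This is where the structural estimate enters: by Lemma~\ref{structural} we have $B^N/B \geq (1+f)/(1+\sqrt{f})^2$, so it suffices to verify $(1+f)/(1+\sqrt{f})^2 > (1+f)(1-\sqrt{f})^2$. Cancelling $(1+f)$ and using $(1-\sqrt{f})(1+\sqrt{f}) = 1-f$, this reduces to the elementary fact $1 > (1-f)^2$, which holds strictly for every $f\in(0,1)$.

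The only genuine obstacle is the upper bound, and it is structural rather than computational: the algebra alone only bounds $B_1$ in terms of the unknown endogenous price $B^N$, so without Lemma~\ref{structural} pinning $B^N/B$ from below one could not rule out $B_1 \geq B^N$. Chaining the lemma with the slack $1-(1-f)^2 > 0$ is exactly what closes the gap, and as a bonus it delivers a strict inequality rather than a merely weak one, matching the open interval in the statement.
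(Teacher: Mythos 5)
Your proposal is correct and follows essentially the same route as the paper: substitute the explicit expressions into~\eqref{optimal_payment} to obtain the closed form $B_1 = \frac{fB^N}{1+f} + \frac{(1-\sqrt{f})\sqrt{BB^N}}{\sqrt{1+f}}$, observe positivity, and reduce $B_1<B^N$ to $B^N/B > (1+f)(1-\sqrt{f})^2$, which Lemma~\ref{structural} implies. You merely spell out the final comparison $(1+f)/(1+\sqrt{f})^2 > (1+f)(1-\sqrt{f})^2 \Leftrightarrow 1>(1-f)^2$ that the paper leaves implicit.
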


\begin{proof}
Plugging all values in the formula of $B_1$ gives: 

\begin{align*}
B_1 = & 0.5(B^N + \frac{aRf}{2}\frac{2B^N}{a(1+f)R} - \frac{aR}{2} \frac{2B^N}{a(1+f)R} - \\ & \frac{f}{1+f}\sqrt{\frac{2B^N(1+f)R}{a}}\sqrt{\frac{2Ba}{Rf}} +
\frac{1}{f+1}\sqrt{\frac{2B^N(1+f)R}{a}}\sqrt{\frac{2Ba}{R}}).    
\end{align*}

Simplifying gives: 

\begin{equation}\label{simplified_B_1}
    B_1 = (B^N\frac{f}{1+f} + \sqrt{B^NB}(\sqrt{\frac{1}{1+f}} - \sqrt{\frac{f}{1+f}})). 
\end{equation}

Then, $B_1 < B^N$ is equivalent to: 

\begin{equation*}
    B(1+f)(1-\sqrt{f})^2 < B^N.
\end{equation*}
Note that the condition in the lemma~\ref{structural} readily implies the condition above, which finishes the proof of the proposition.
\end{proof}

Note that $B_1$ does not depend on $R$ and $a$, see~\eqref{simplified_B_1}, but only on blob prices and $f$, as claimed earlier. 
To get an intuition of the parameters above, in the following, we consider an example. Let $Tr_{J,L,B_1}$ and $Tr_{J,S,B_1}$ denote effective per transaction costs after fixing the amount large rollup pays for the blob price, $B_1$. The following holds:

\begin{equation}
    Tr_{J,L,B_1}=(B_1+d_{J,L})/C_{J,L} \text{ and } Tr_{J,S,B_1}=(B^N-B_1+d_{J,S})/C_{J,S}.
\end{equation} 

Let the rollup $X\in \{L,S\}$ improvement is denoted by $$I_{X,B_1}:=(Tr_X-Tr_{J,X,B_1}) / Tr_X = 1 - Tr_{J,X,B_1}/Tr_X,$$ and the proportional payment of the large rollup $B_1^{pr}:=\frac{B^N}{1+f}$.

\begin{example}
    Suppose $R=B=a=1$ and $f=0.25$. That is, a small rollup has 4 times less traffic than a large rollup.  Then, in the case of large rollup posting blobs alone, parameters are equal  $t_L=Tr_L = C_L = \sqrt{2}\approx 1.41$. Parameters of the small rollup posting alone:  $t_S=Tr_S = \sqrt{8} \approx 2.82$ and $C_L=\sqrt{0.5}\approx 0.71$. The joint posting parameters are: $t_J=Tr_J=\sqrt{2\cdot 0.81/1.25}\approx 1.14$ and $C_J=\sqrt{2.5\cdot0.81}\approx 1.42.$ 
    Large rollup includes $C_{J,L} = \frac{1.42}{1.25}\approx 1.138$ transactions in the joint blob, small rollup includes $C_{J,S}\approx 0.285$ transactions. Large rollup total delay is $d_{J,L}\approx 1.14^2/2 = 0.65$ and small rollup delay is $d_{J,S}=0.25d_{J,L}\approx 0.16.$ 
    Finally, we plug in all parameters in the calculation of the large rollup share in the Nash bargaining solution~\eqref{optimal_payment}: 
    $$B_1\approx (0.81+0.16-0.65-0.285\cdot 2.82 + 1.138\cdot 1.41)/2\approx 0.564.$$ Then, the small rollup pays $B_2\approx 0.15$. Note that they do not share the total price $0.81$ proportionally, which would result in the large rollup paying $B_1^{pr}=0.81\cdot \frac{4}{5}=0.648$.  Plugging in all values, we obtain $Tr_{J,L,B_1}=(0.56+0.65)/1.138\approx 1.07$ and $Tr_{J,S,B_1} = (0.15+0.16)/0.28\approx 1.43.$ That is, the large rollup improvement is $I_{L,B_1}\approx 24.7\%$, while the small rollup improvement is $I_{S,B_1} \approx 49.4\%$. Note that the large rollup improvement of the per-transaction cost is smaller than the improvement of the small rollup. 
\end{example}

Observations obtained in the example above are more general, which we show in the following propositions.
First, we show that the Nash bargaining outcome for the large rollup is less than the proportional payment to the blob price: 

\begin{proposition}\label{nash_vs_proportional}
    $B_1\leq B_1^{pr}$ for any $f<1$.
\end{proposition}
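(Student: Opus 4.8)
The plan is to mirror the strategy used in the proof of Proposition~\ref{internal_solution}: start from the closed form~\eqref{simplified_B_1} for $B_1$, reduce the claimed inequality to a clean comparison of the ratio $B^N/B$, and then invoke the structural bound of Lemma~\ref{structural}. Writing $B_1^{pr}=B^N/(1+f)$ and recalling that $\sqrt{1/(1+f)}-\sqrt{f/(1+f)}=(1-\sqrt{f})/\sqrt{1+f}$, the target inequality $B_1\leq B_1^{pr}$ becomes, after subtracting the common term $B^N f/(1+f)$ from both sides,
$$\sqrt{B^N B}\,\frac{1-\sqrt{f}}{\sqrt{1+f}}\leq \frac{B^N(1-f)}{1+f}.$$

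Next I would factor $1-f=(1-\sqrt{f})(1+\sqrt{f})$ on the right. Since $f<1$ guarantees $1-\sqrt{f}>0$, I can divide both sides by $1-\sqrt{f}$ without reversing the inequality, leaving $\sqrt{B^N B}/\sqrt{1+f}\leq B^N(1+\sqrt{f})/(1+f)$. Both sides are positive, so I can cancel a factor of $\sqrt{B^N}$, multiply through by $\sqrt{1+f}$, and square to obtain the equivalent condition
$$\frac{B^N}{B}\geq \frac{1+f}{(1+\sqrt{f})^2}.$$

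This last inequality is exactly Lemma~\ref{structural}, so the proof closes immediately by appeal to it. The only points requiring care are bookkeeping ones: justifying the division by $1-\sqrt{f}$ using $f<1$, and checking that every quantity being cancelled or squared ($B$, $B^N$, $1+f$) is strictly positive so that the chain of equivalences is genuinely reversible. Unlike in Proposition~\ref{internal_solution}, where the reduced condition was merely \emph{implied} by Lemma~\ref{structural}, here the reduced condition coincides with the lemma verbatim; in particular the equality $B_1=B_1^{pr}$ holds precisely when the bound of Lemma~\ref{structural} is tight, i.e. when the two joining rollups are the only blob posters. I expect no substantive obstacle beyond this algebraic reduction.
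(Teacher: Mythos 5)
Your proof is correct and follows essentially the same route as the paper: using the closed form~\eqref{simplified_B_1}, you reduce $B_1\leq B_1^{pr}$ to the inequality $B^N/B\geq (1+f)/(1+\sqrt{f})^2$ and close by invoking Lemma~\ref{structural}, exactly as the paper does. Your extra care in justifying the division by $1-\sqrt{f}$ and your observation that equality holds precisely when the lemma's bound is tight are accurate refinements, not departures from the argument.
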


\begin{proof}
$B_1<B^{pr} = \frac{B^N}{1+f}$ from~\eqref{simplified_B_1} is equivalent to 

\begin{equation}
\sqrt{\frac{B^NB}{1+f}}(1-\sqrt{f}) < \frac{B^N(1-f)}{1+f},    
\end{equation}

which is on its own equivalent to $B<\frac{(1+\sqrt{f})^2}{1+f}B^N$. This condition is exactly the condition in the lemma~\ref{structural}, which finishes the proof of the proposition.

\end{proof}


Next, we obtain a lower bound on the Nash bargaining outcome. Namely, we show the following:

\begin{proposition}\label{lower_bound_B1}
    $B_1\geq B^N/2$ for any $f<1$. 
\end{proposition}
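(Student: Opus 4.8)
The plan is to argue directly from the closed-form expression~\eqref{simplified_B_1}, namely $B_1 = \frac{f}{1+f}B^N + \sqrt{B^N B}\,\frac{1-\sqrt{f}}{\sqrt{1+f}}$, and to estimate the sign of the difference $B_1 - B^N/2$. Subtracting $B^N/2$ and combining the two $B^N$ terms via $\frac{f}{1+f} - \frac12 = \frac{f-1}{2(1+f)}$, one sees that the pure-$B^N$ contribution is negative (since $f<1$) while the mixed $\sqrt{B^N B}$ term is positive. Hence $B_1 \ge B^N/2$ is not automatic and must be obtained by comparing the magnitudes of these two opposing terms.

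The key algebraic move is to factor $1-f = (1-\sqrt{f})(1+\sqrt{f})$ in the negative term. After rearranging $B_1 - B^N/2 \ge 0$ into $\sqrt{B^N B}\,\frac{1-\sqrt{f}}{\sqrt{1+f}} \ge B^N\,\frac{(1-\sqrt{f})(1+\sqrt{f})}{2(1+f)}$, I would cancel the common positive factor $1-\sqrt{f}$ (positive precisely because $f<1$), divide by $\sqrt{B^N}$, and clear $\sqrt{1+f}$. This reduces the claim to the clean comparison $\sqrt{B} \ge \sqrt{B^N}\,\frac{1+\sqrt{f}}{2\sqrt{1+f}}$, and, after squaring both (positive) sides, to the price-ratio bound $\frac{B^N}{B} \le \frac{4(1+f)}{(1+\sqrt{f})^2}$.

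It then remains only to verify this upper bound on the ratio. Since this subsection is set in Case~1, where both rollups post blobs initially, Proposition~\ref{both_blob_posting} gives $B^N \le B$, i.e. $\frac{B^N}{B}\le 1$; and $\frac{4(1+f)}{(1+\sqrt{f})^2}\ge 1$ rearranges to $3+3f-2\sqrt{f}\ge 0$, which holds trivially for $f\in(0,1)$ since $3 \ge 2\sqrt{f}$. Chaining $\frac{B^N}{B}\le 1 \le \frac{4(1+f)}{(1+\sqrt{f})^2}$ closes the argument. The only delicate point is the sign bookkeeping in the reduction: because the $B^N$ term and the mixed term pull in opposite directions, one must divide only by quantities known to be positive and track the inequality direction through the squaring. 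Conceptually, the proof is the mirror image of how Lemma~\ref{structural} supplied the lower bound used in Propositions~\ref{internal_solution} and~\ref{nash_vs_proportional}: there a lower bound on $B^N/B$ was needed, whereas here the conclusion rests on the upper bound $B^N\le B$ furnished by joint posting lowering the price.
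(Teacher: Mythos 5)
Your proof is correct and follows essentially the same route as the paper: reduce $B_1\geq B^N/2$ via~\eqref{simplified_B_1} to $\sqrt{B^NB}\,\frac{1-\sqrt f}{\sqrt{1+f}}\geq \frac{1-f}{2(1+f)}B^N$, cancel $1-\sqrt f$, square, and conclude from $B^N\leq B$ (Proposition~\ref{both_blob_posting}). Your intermediate bound $\frac{B^N}{B}\leq \frac{4(1+f)}{(1+\sqrt f)^2}$ is in fact the correct form of the inequality the paper misprints as $\frac{B^N}{B}\leq 4(1+f)(1+\sqrt f)^2$ (only the fractional form has minimum value $2$ at $f=1$, as the paper itself asserts), so your write-up quietly fixes that typo.
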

\begin{proof}
    The condition $B_1\geq B^N/2$ is equivalent to 
    
    $$\sqrt{B^N B}\frac{1-\sqrt{f}}{\sqrt{1+f}} \geq \frac{1-f}{2(1+f)}B^N, $$ which after simplification becomes:

    $$\frac{B^N}{B}\leq 4(1+f)(1+\sqrt{f})^2.$$

    The right-hand side of the above inequality is decreasing in $f$ and achieves its minimum value $2$ when $f=1$. By proposition~\ref{both_blob_posting}, we know that $B\geq B^N$, which finishes the proof.  
\end{proof}

That is, the large rollup never pays less than half of the new blob price, which is fair. 

Last, we show that the large rollup improvement in the Nash bargaining outcome is less than the small rollup improvement. 

\begin{proposition}\label{nash_vs_equal_improvement}
    $I_{L,B_1}\leq I_{S,B_1}$ for any $f<1$.
\end{proposition}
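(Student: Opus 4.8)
The plan is to translate the statement $I_{L,B_1}\le I_{S,B_1}$ into an inequality between the two per-transaction cost ratios and then reduce everything to the already-established bound $B^N\le B$. Since by definition $I_{X,B_1}=1-Tr_{J,X,B_1}/Tr_X$, the claim is equivalent to
\[
\frac{Tr_{J,L,B_1}}{Tr_L}\ \ge\ \frac{Tr_{J,S,B_1}}{Tr_S}.
\]
First I would substitute the explicit values collected earlier. With $t_J^2=\frac{2B^N}{(1+f)aR}$ one gets $d_{J,L}=\frac{B^N}{1+f}$, $d_{J,S}=\frac{fB^N}{1+f}$, together with $C_{J,L}=\frac{C_J}{1+f}$ and $C_{J,S}=\frac{C_Jf}{1+f}$, so that $Tr_{J,L,B_1}=\frac{(1+f)B_1+B^N}{C_J}$ and $Tr_{J,S,B_1}=\frac{(1+2f)B^N-(1+f)B_1}{C_Jf}$. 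Using $Tr_S=Tr_L/\sqrt f$, the common positive factor $C_J\,Tr_L$ cancels after cross-multiplication, and the inequality collapses to the one-dimensional condition
\[
(1+f)(1+\sqrt f)\,B_1\ \ge\ (1+2f-\sqrt f)\,B^N .
\]

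Next I would insert the closed form of $B_1$ from~\eqref{simplified_B_1}, namely $B_1=\frac{f}{1+f}B^N+\frac{1-\sqrt f}{\sqrt{1+f}}\sqrt{B^NB}$, divide the displayed condition through by $B^N(1+f)(1+\sqrt f)$, and move the term $\frac{f}{1+f}$ to the right. The right-hand side then becomes $\frac{1+f-\sqrt f-f\sqrt f}{(1+f)(1+\sqrt f)}$, and the crucial algebraic step is to recognize the factorization $1+f-\sqrt f-f\sqrt f=(1-\sqrt f)(1+f)$, which makes this quantity equal to $\frac{1-\sqrt f}{1+\sqrt f}$. Writing $r=B^N/B$ so that $\sqrt{B^NB}/B^N=1/\sqrt r$, the condition becomes $\frac{1}{\sqrt r}\cdot\frac{1-\sqrt f}{\sqrt{1+f}}\ge\frac{1-\sqrt f}{1+\sqrt f}$. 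Because $f<1$ gives $1-\sqrt f>0$, I can cancel that factor and square both sides, obtaining the equivalent and strikingly simple requirement $r\le\frac{(1+\sqrt f)^2}{1+f}$.

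Finally I would close the argument by observing that throughout the cost-sharing analysis both rollups post blobs, so Proposition~\ref{both_blob_posting} applies and yields $r=B^N/B\le 1$, whereas $\frac{(1+\sqrt f)^2}{1+f}=1+\frac{2\sqrt f}{1+f}>1$; hence $r\le 1<\frac{(1+\sqrt f)^2}{1+f}$ holds automatically and the proof is complete. I expect the main obstacle to be purely bookkeeping: the reduction hinges entirely on arranging the substitutions for $Tr_{J,L,B_1}$, $Tr_{J,S,B_1}$ and $B_1$ so that the common $(1-\sqrt f)$ factor becomes visible and the factorization $1+f-\sqrt f-f\sqrt f=(1-\sqrt f)(1+f)$ can be applied. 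It is worth noting that, in contrast to the earlier propositions, this statement relies on the \emph{upper} bound $B^N\le B$ rather than on the lower bound of Lemma~\ref{structural}.
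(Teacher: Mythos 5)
Your proof is correct, but it takes a genuinely different route from the paper's, and the difference matters. The paper reduces the claim to the inequality $\frac{2}{1+f}+\sqrt{B/B^N}\,\frac{1-f}{\sqrt{1+f}}-\sqrt{f}\ge 0$ and closes it by showing the left-hand side is a decreasing function of $f$ that vanishes at $f=1$, concluding that the result holds \emph{unconditionally} in $B$ and $B^N$. Your reduction instead lands on the clean equivalent condition $B^N/B\le (1+\sqrt f)^2/(1+f)$, which you then discharge using $B^N\le B$ from Proposition~\ref{both_blob_posting} under the section's standing assumption that both rollups post blobs. I checked your algebra independently (the cancellation of $C_J Tr_L$, the factorization $1+f-\sqrt f-f\sqrt f=(1+f)(1-\sqrt f)$, and the final squaring) and it is right; by contrast, the paper's first displayed inequality has a right-hand side equal to $\frac{2f\sqrt f+\sqrt f-1}{1+f}B^N$ where the correct reduction gives $\frac{1+2f-\sqrt f}{1+f}B^N$ (these agree only at $f=1$), and a numerical check (e.g.\ $f=1/4$, $B=1$, $B^N=1.9$, $R=a=1$ gives $I_{L,B_1}\approx -0.021 > I_{S,B_1}\approx -0.041$) confirms that the statement is \emph{not} unconditional in $B$ and $B^N$: it fails precisely when $B^N/B$ exceeds $(1+\sqrt f)^2/(1+f)$. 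So your reliance on the upper bound $B^N\le B$ is not a shortcut but a necessity, and your closing remark identifying which bound the proposition actually needs is the correct diagnosis; the only caveat is that outside the both-rollups-post-blobs regime (where $B^N$ can rise above $B$) the proposition would require the extra hypothesis your reduction exposes, though in that regime joint posting is typically unprofitable anyway.
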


\begin{proof}
    The condition is equivalent to: 

    \begin{equation}
        B^N\frac{(1+\sqrt{f})f}{1+f} +\sqrt{B^NB}(1+\sqrt{f})(\sqrt{\frac{1}{1+f}}-\sqrt{\frac{f}{1+f}})\geq \frac{B^N}{1+f}(f\sqrt{f}-1)+B^N\sqrt{f}. 
    \end{equation}

    Further simplification gives: 

    \begin{equation*}
        \frac{2}{1+f}+\sqrt{\frac{B}{B^N}}\frac{1-f}{\sqrt{f+1}} - B^N\sqrt{f}\geq 0.
    \end{equation*}

    Let $p:=\sqrt{B/B^N}$. We show that the function 
    $$h(f):=\frac{2}{1+f}+p\frac{1-f}{\sqrt{1+f}}-\sqrt{f}$$ 

    is decreasing in $f$ on the interval $[0,1]$ for any $p>0$. Note that $dh(f)/df<0$ for any $f>0$. Since $h(1)=0,$ we get the proof of the proposition.
\end{proof}

The intuition is simple: a small rollup has much more room to improve than a large rollup. Note that the result holds {\it unconditionally} regarding the equilibrium prices $B$ and $B^N$. Since we consider only Pareto-efficient solutions, the result, in particular, implies that a Nash bargaining solution favors the small rollup compared to the ''fair" blob cost-sharing rule, which improves both rollup per-transaction costs equally. The latter favors the large rollup to a high extent: the rollups pay almost equally even if $f=0.25$.   

In this section, we assumed that the two rollups engaged in the joint blob posting were the ones that initially were posting blobs. This in particular gives a lower bound on the new equilibrium price, derived in the Lemma~\ref{structural}, and is used in the proofs of propositions~\ref{internal_solution} and~\ref{nash_vs_proportional}. If one or both rollups were using the main market, then the propositions would be automatically satisfied, as the new equilibrium price goes up.   

\section{Extensions}

In this section, we discuss two natural extensions of the baseline model. In the first extension, the blob size is limited. Suppose there is the maximum blob size $U$, so that for any rollup, $Rt \leq U$, or equivalently, the posting time $t$ is upper bounded by  ${U/R}$. Intuitively, adding an upper bound on the blob size causes the blob price in the equilibrium to increase, compared to the baseline model. The reason is that with the upper bound the rollups produce blobs even faster.  
To figure out rollup optimum posting time in the case of using the data market, we again solve the optimal time of posting using the first-order condition and compare the blob size with the upper bound $U$. If the obtained size is larger than the upper bound, then we take the size to be $U$ and adjust the posting time accordingly. If on the other hand, the optimal size is smaller than the upper bound, the rollup keeps the same optimal strategy.
Calculating the main market posting time is done exactly as in the main model, therefore, deciding on the posting strategy given equilibrium price is trivial. Calculating the equilibrium price is also easy. 

Shared blob posting stays the same if the aggregate demand does not cross the threshold, and cost-sharing does not need modification as well. If one of the participating rollups reaches the threshold size itself and the other one does not, then it has more bargaining power, as it saves only on the delay cost. However, if both rollups reach the threshold themselves, cost-sharing becomes more intricate. Both rollups save only on the delay cost, and therefore, large rollup has lower bargaining power over a small rollup, compared to the baseline model.

In the second extension,   rollups have compression technology. It is natural to assume that compression technology is monotonic, that is, the compression factor is increasing in the data size. Then, the existence of compression, in general, favors bigger rollups as well, since they generate enough transaction data to compress efficiently faster than smaller rollups do. In the case of shared blob posting, this advantage should be compensated. The Nash bargaining outcome guarantees such compensation since the per-transaction cost in the disagreement point for the small rollup will be low because of compression, as the joint blob posting does not let the small rollup use compression to full extent.

\section{Conclusions and Future Work}
We introduced a simple economic model to analyze EIP-4844 and its effects on rollups to decrease L1 data costs. In the proposed model, large enough rollups use a new market for posting their data to L1, while the rest continue using the original market. Moreover, we studied sharing blob posting and cost-sharing rules. First, we outlined conditions when sharing is profitable for both rollups and then described an axiomatic approach to the blob-posting cost-sharing rule.  
The are many interesting future research avenues: (a) optimal strategy in an oligopolistic market with relevant strategic interaction between rollups;  (b) strategic consumers and endogenous main market equilibrium price --
we can extend the above model with demand growth to study questions about the equilibrium structure of demand;
(c) finally, using agent-based simulation, we can numerically test the theoretical results obtained above in an environment that closely represents the actual Ethereum market and proposed data market with dynamic transaction fee adjustment. For example, we can model discrete block time and compression technology of rollups. This would allow us to validate our results and provide insights into practical rollup posting policies/services as well as L1 fee market design.

\bibliographystyle{acm}
\bibliography{name}

\end{document}